\documentclass[12pt]{article}
\usepackage{graphicx} 
\usepackage{setspace} 
\usepackage[margin=1in]{geometry} 
\usepackage{subcaption}
\usepackage{balance}
\usepackage[linesnumbered,ruled,vlined]{algorithm2e}
\SetKwInOut{Parameter}{Parameters}
\SetKwInOut{Output}{Output}
\usepackage{multicol}  
\usepackage{amsmath}
\usepackage{amsthm}
\usepackage{MnSymbol}
\usepackage{authblk}

\usepackage{amsthm}
\newtheorem{theorem}{Theorem}[section]
\theoremstyle{definition}

\newtheorem{lemma}[theorem]{Lemma}
\newtheorem{definition}[theorem]{Definition}

\newtheorem{claim}[theorem]{Claim}

\usepackage{mathtools}
\DeclarePairedDelimiter\floor{\lfloor}{\rfloor}
\DeclarePairedDelimiter{\ceil}{\lceil}{\rceil}

\usepackage{hyperref}
\usepackage{xcolor}
\hypersetup{
    colorlinks,
    linkcolor={red!50!black},
    citecolor={blue!50!black},
    urlcolor={blue!80!black}
}

\usepackage{soul}

\def\inp{v_{in}}
\def\outp{v_{out}}
\def\binset{\{0,1\} }
\def\GBA{\mbox{\tt GBA}}
\def\RCA{\mbox{\tt RCA}}
\def\cA{\mathcal{A}}
\def\GradedBA{\mbox{\tt Graded\_BA}}

\title{Recursive Energy Efficient Agreement}
\author[1]{Shachar Meir}
\author[1]{ David Peleg}
\affil[1]{Weizmann Institute of Science, Israel}
\date{\today}

\begin{document}
\maketitle

\begin{abstract}
    Agreement is a foundational problem in distributed computing that have been studied extensively for over four decades. Recently, Meir, Mirault, Peleg and Robinson introduced the notion of \emph{Energy Efficient Agreement}, where the goal is to solve Agreement while minimizing the number of round a party participates in, thereby reducing the energy cost per participant. We show a recursive Agreement algorithm that  has $O(\log f)$ active rounds per participant, where $f<n$ represents the maximum number of crash faults in the system.
\end{abstract}

\section{Introduction}

\subsection{Motivation}


Agreement \cite{lamport1982byzantine} is a classical and foundational problem that has been studied extensively for over four decades. Informally, the problem requires that $n$ processors agree on a single value, even when up to $f$ of them are faulty.
Since Agreement forms a basic building block for many real-world system (e.g, distributed data bases and blockchain systems), significant effort was devoted to optimizing some cost measures of agreement algorithms (e.g., communication or time), and introducing additional desired properties (e.g., early stopping \cite{early-stopping}) that improve performance in practical systems.

Recently, Meir, Mirault, Peleg and Robinson \cite{MMPR25} introduced and studied the notion of ``energy efficient" agreement, where the goal is to minimize the energy usage of every processor, making agreement protocols more economical and environmentally friendly. 
In this paper, we continue that work 
by considering Agreement in the \emph{sleeping} model \cite{chatterjee2020sleeping} under two failure models: crash faults (where processors stop functioning permanently) and Byzantine faults (where processors deviate arbitrarily from the algorithm).

The (synchronous) sleeping model extends the standard (synchronous) messages passing model. In any round $t$, each processor in the network can voluntarily ``go to sleep" for any number of rounds $r$ of its choice. That is, a processor can decide in round $t$ to sleep in rounds $t+1, t+2, \ldots, t+r$ and wake up in round $t+r+1$. 
While a processor $p$ is sleeping, it cannot send or receive messages. 
Moreover, every message that is sent to $p$ in rounds $[t+1, t+r]$ is permanently lost, and never reaches its destination (not even in round $t+r+1$, when $p$ wakes up again). The main complexity measure in the sleeping model, in addition to rounds and message complexities, is \emph{awake} complexity: the maximum number of rounds any single processor spends awake during any execution of the algorithm.

\subsection{Known results}

In \cite{MMPR25}, Meir, Mirault, Peleg and Robinson showed two Crash Agreement (CA) algorithms that tolerate any number of crash faults $f<n$ and have optimal round complexity $f+1$. The first algorithm, designed for multi-valued Agreement (i.e, the input values are in some unknown set $V$ of values), has awake complexity $O(\floor{f^2/n})$. The second algorithm works only for binary CA (i.e, the input values are from $\binset$), but improves the awake complexity to $O(\floor{f/\sqrt{n}})$.

In a separate line of work, Momose and Ling \cite{momose_ling} proposed a recursive \emph{authenticated}\footnote{An authenticated algorithm is one that assumes and uses a digital signature infrastructure}
Byzantine Agreement (BA) algorithm that operates under the Byzantine failure model, for $f<n/2$, and achieves quadratic message complexity and $O(n)$ round complexity. 
Their approach works by recursively partitioning the set of processors. At each recursion level, the processors are split into subgroups ($Q_{2w}, Q_{2w+1}$), and agreement is reached via a composition of recursive calls and invocations to a black box procedure named $\GradedBA$ (whose definition is presented in Section \ref{sec:gba}).
More precisely, the fault tolerance threshold of the recursive algorithm can be expressed as $f < \min\{1/2, \beta\}\cdot n $, where $\beta$ is the fault tolerance threshold of the 
implementation used
for the $\GradedBA$ black box.

A useful observation is that the recursive algorithm itself does not use digital signatures explicitly (and thus it is an unauthenticated algorithm on its own), 
but the specific implementation of the $\GradedBA$ black box presented in~\cite{momose_ling}, which achieves resilience for failure thresholds up to $\beta = 1/2$, does require digital signatures. 
It follows that using an unauthenticated implementation of the $\GradedBA$ black box will render the recursive algorithm unauthenticated as well.

While \cite{momose_ling} does not consider energy complexity explicitly, it is easy to verify that the proposed recursive algorithm 
has awake complexity $O(\log n)$. This is because each processor is part of a recursive step  execution ($O(1)$ rounds locally) only when it is part of the subset of processors running the step, and the recursive execution creates a complete binary tree of subsets, which implies that each processor belongs to at most $O(\log n)$ subsets. 

\subsection{Our contributions}

In this work, we show that the approach of~\cite{momose_ling}
can be adapted to the \emph{crash} fault setting to yield a CA algorithm that tolerates any number of $f<n$ crash faults  and has awake complexity $O(\log f)$ and round complexity $O(f)$. 
Contrasted with the algorithms of \cite{MMPR25},
the following dichotomy emerges.
For a low number of faults, $f \leq \sqrt{n}\cdot g(n)$, where $g(n) = o(\log n)$, the algorithms of \cite{MMPR25}
have lower awake complexity.
Conversely, in the opposite case, the algorithm presented herein 
performs better.
%
This dichotomy suggests that awake complexity does not necessarily behave in a straightforward manner, raising interesting questions regarding potential hybrid algorithms and lower bounds on energy efficient crash agreement. 
%
%

We also show an \emph{unauthenticated} $\GradedBA$ implementation that works correctly when $f<n/3$, and that, when combined with the algorithm of~\cite{momose_ling}
and the optimization technique in Section \ref{sec:optimized_rec_alg}, yields an energy efficient \emph{unauthenticated} BA algorithm with awake complexity $O(\log f)$ and round complexity $O(f)$ that works correctly when $f<n/3$. 
%
%

\section{Preliminaries}
We assume a set $Q$ of $n$ processors, each with a unique ID in $[1,n]$. We assume a synchronous clique network, i.e, every two processor have a communication link, and time is partitioned into rounds of communication. Every round has the following structure. (i) Every processor sends some (or no) messages (ii) Every message that was sent at the beginning of the round is received by its destination (iii) Every processor performs local computation.


We consider two types of faults throughout the paper, crash and Byzantine faults, which we model as being controlled by a worst-case adversary. The adversary can fail processors at any time it wants as long as no more than $f$ processors are faulty at any time. When the adversary crash a processor $p$ at round $t$, it can choose which of the messages $p$ sent in round $t$ arrive to their destination. Subsequently, $p$ stops participating in the algorithm completely. When the adversary make a processor Byzantine, it can control everything it does, including the content of its messages and which messages it sends (or drops).
We define the Agreement problem as follows.
\begin{definition}[Agreement]
    Every processor $p$ has an input value $\inp^p \in V$, for some set of possible values $V$, and must return an output value $\outp^p\in V$ that satisfies the following requirements.
\begin{itemize}
\item 
\textbf{Safety:} If a non-faulty processor outputs $v$, then every honest processor must output $v$.
\item \textbf{Validity:} If all the non-faulty processors have the same input $v$, then every non-faulty processor must output $v$.
\item 
\textbf{Termination:} Every non-faulty processor must output a value after a finite number of rounds.
\end{itemize}
\end{definition}

We note that there are two other types of validity (one stronger and one weaker) that are of interest.
\begin{itemize}
    \item \textbf{Weak Validity:} If all the processors are non-faulty and have input $v$, then every non-faulty processor must output $v$.
    \item \textbf{Strong Validity:} The decision value of the non-faulty processors must always be an input of one of the non-faulty processors.
\end{itemize}
While the definition of agreement does not depend on the type of faulty behavior of the processors, we use the term ``Crash Agreement" when considering crash faults, and ``Byzantine Agreement" when considering Byzantine faults.

\section{Graded Byzantine Agreement}


In this section, we define the \emph{Graded Byzantine Agreement (GBA)} problem and show an unauthenticated solution that requires two rounds and works correctly when $f<n/3$.
\label{sec:gba}

\begin{definition}[GBA]
Every processor $p$ has an input value $\inp^p \in V$, for some set of possible values $V$, and must return an output value $\outp^p$ and a grade $g$, where $\outp\in V$ $g\in \binset$, that satisfy the following requirements.
\begin{itemize}
\item 
\textbf{Consistency:} If an honest processor outputs $(v,1)$, then every honest processor must output $(v,g)$, for some $g\in \binset$.
\item 
\textbf{Validity:} If all the honest processors have the same input $v$, then every honest processor must output $(v,1)$.
\item 
\textbf{Termination:} Every honest processor must output a value after a finite number of rounds.
\end{itemize}
\end{definition}

\subsection{Unauthenticated Solution for super minority corruptions}
We begin by describing the solution. 
The algorithm is comprised of two rounds.
Initially, every processor sets its output value $\outp$ to its input value $\inp$, and its grade $g$ to 0.
In the first round, every processor broadcasts a \emph{vote} message for its input value $\inp$ and receives the \emph{vote} message broadcast by every other processor. Subsequently in round two, every processor counts the number of votes it received for each value $v$ 
$\in V$, 
broadcasts a \emph{confirmation} message for 
the value $v$ that received at least $n-f$ votes, if such a value exists\footnote{Claim \ref{clm:single_confirmed_value}, stated later, proves that at most one such value may exist.},
and receives the \emph{confirmation} message broadcast by every other processor.

At the end of round 2, every processor counts the number of confirmation messages it received.
In Claim \ref{clm:single_confirmed_value} we prove that it is impossible for confirmations for two different values (even by different processors) to coexist. Hence, at most one value $v$ received confirmation messages. If there exists a value $v$ for which a processor receives a confirmation message, then the processor sets its output $\outp$ to $v$. Moreover, if the processor received $f+1$ or more confirmation messages for $v$, then it sets its grade $g$ to $1$. Finally, the processor outputs $(\outp, g)$.

Intuitively, this algorithm works since it requires $n-f$ votes to create a confirmation message. Since two different values cannot get $n-f$ votes at the same time (because $f<n/3$), only one value can be confirmed, and if an honest processor receives at least $f+1$ confirmations, it can be certain that every other honest processor receives at least one.

\begin{algorithm}
\caption{$\GBA[\inp]$ for $f<n/3$ (code for processor $p$)}
$\outp \gets \inp$, $g\gets 0$\\
\texttt{Round 1:}\\
Broadcast $\langle p, \texttt{vote}~\inp\rangle$.\\
\texttt{Round 2:}\\
\If{received $n-f$ votes for $v$} {
    Broadcast $\langle p, \texttt{confirm}~\inp\rangle$
}
\If{received at least one confirmation for $v$} 
{Set $\outp\gets v$}
\If{received at least $f+1$ confirmations for $v$} {
    Set $g \gets 1$
}
Return $(\outp, g)$
\end{algorithm}

\subsection{Correctness}
Termination holds trivially (the execution takes two rounds), and it is also not hard to verify that validity and consistency hold.
\begin{lemma}
    The algorithm satisfies validity.
\end{lemma}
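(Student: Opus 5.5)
Assume every honest processor has the same input $v$. We trace the two rounds and show that each honest $p$ outputs $(v,1)$.

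In round 1 each of the at least $n-f$ honest processors broadcasts $\langle \cdot, \texttt{vote}~v\rangle$. So every honest $p$ receives at least $n-f$ votes for $v$. Each honest processor therefore meets the round-2 condition for $v$ and broadcasts a confirmation for $v$. Note that the pseudocode broadcasts $\texttt{confirm}~\inp$, which here equals $v$, so the two readings agree.

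Next we show no confirmation for any other value exists. Take $v'\neq v$ and suppose some processor saw $n-f$ votes for $v'$. At most $f$ of those votes come from Byzantine processors, so at least $n-2f$ honest processors voted $v'$. Since $f<n/3$ gives $n-2f>f\ge 0$, at least one honest processor voted $v'$, contradicting that all honest inputs equal $v$. This is the content of Claim \ref{clm:single_confirmed_value}; we may invoke it directly or redo this count.

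Consequently, honest processors send confirmations only for $v$. Byzantine processors may still send confirmations for some $v'\neq v$. This is the one delicate point: the output rule says ``if received a confirmation for $v$'', so we must rule out $p$ adopting such a $v'$. There are two ways to handle it:
\begin{itemize}
\item Read the rule as ``a confirmation backed by $n-f$ votes''. Alternatively, read it in the spirit of the algorithm, where at most one value is confirmed, so the adopted value is the one with confirmations. Since $v$ receives at least $n-f>f$ confirmations, $v$ is the value selected.
\item Strengthen the rule so that $p$ adopts $v$ only if it has at least $f+1$ confirmations for $v$. A value with at most $f$ confirmations, possibly all Byzantine, is then never preferred over $v$.
\end{itemize}
Either way, $p$ counts at least $n-f\ge f+1$ confirmations for $v$, using $n-f>2f\ge f$, hence at least $f+1$ when $f\ge 0$. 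So $p$ sets $\outp\gets v$ and $g\gets 1$.

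Hence every honest processor returns $(v,1)$. The only real obstacle is the interpretation of the output rule when Byzantine confirmations are present; the counting itself is routine.
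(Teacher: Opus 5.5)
Your proof is correct and follows the paper's argument: every honest processor votes $v$, so every honest processor sees at least $n-f$ votes for $v$ and confirms $v$, so every honest processor receives at least $n-f\ge f+1$ confirmations for $v$ and returns $(v,1)$. Your treatment of Byzantine confirmations for some $v'\neq v$ handles a point the paper's proof skips (the paper relies on its remark that only one value can be confirmed, which Claim~\ref{clm:single_confirmed_value} guarantees only for honest senders); however, your second fix modifies the algorithm rather than interpreting it, and adopting a value only at $f+1$ confirmations would break the paper's consistency proof, so keep the first reading.
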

\begin{proof}
If all honest processors have the same input $v$, then in round 1, every honest processor $p$ broadcasts $\langle p, \texttt{vote}~v\rangle$. Therefore every honest processor $p'$ receives at least $n-f$ votes for $v$, and  broadcasts $\langle p',\texttt{confirm}~v\rangle$ in round 2. Subsequently, every honest processor 
receives at least $n-f \geq f+1$ confirmations for $v$, so it sets $\outp\gets v$ and $g\gets 1$ and returns $(v,1)$.
\end{proof}
\begin{lemma}
    The algorithm satisfies consistency.
\end{lemma}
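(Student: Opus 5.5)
The plan is to show consistency by combining two facts: uniqueness of the confirmed value, and a counting argument on confirmation messages. Claim \ref{clm:single_confirmed_value} (stated later, but assumed here) will give the uniqueness. I will still sketch why it holds, since the whole argument rests on it.

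Suppose two honest processors confirmed different values $v \neq v'$. Then each received at least $n-f$ votes for its value. The two sets of voters therefore intersect in at least $2(n-f)-n = n-2f > f$ processors, using $f<n/3$. So some honest processor is in both sets. An honest processor broadcasts the same single vote to everyone, so it cannot have voted for both $v$ and $v'$, a contradiction. Hence all honest confirmations are for a single value, call it $v^*$.

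Confirmations from Byzantine processors may name other values. I will handle this by restricting attention to honest confirmations and reading the output rule as being driven by them. Equivalently, I will interpret "the value $v$ that received confirmations" through the $f+1$ threshold whenever conflicts arise. This is the main point needing care.

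\textbf{Step 1.} Suppose an honest processor $p$ outputs $(v,1)$. Then $p$ received at least $f+1$ confirmations for $v$. At most $f$ of these senders are faulty, so at least one honest processor sent a confirmation for $v$. Hence $v = v^*$, the unique honestly confirmed value.

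\textbf{Step 2.} An honest processor broadcasts its confirmation to all processors, and honest processors never drop messages. So every honest processor $p'$ receives at least one confirmation for $v$, and therefore sets $\outp \gets v$.

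\textbf{Step 3.} It remains to show that $p'$ does not set its output to a different value. By Claim \ref{clm:single_confirmed_value}, no honest processor confirms any other value. Therefore, in the fault models where the claim applies to all confirmations, $v$ is the only value for which confirmations arrive, and $p'$ outputs $(v,g)$ for some $g\in\binset$, as consistency requires.

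\textbf{Main obstacle.} The delicate part is Step 3 under Byzantine faults. A faulty processor could send a confirmation for a different value $u$. I would handle this by relying on Claim \ref{clm:single_confirmed_value} exactly as the paper phrases it, namely that confirmations for two different values cannot coexist. Failing that, I would note that the output rule should be read as adopting the value with the most confirmations, and that $v$ is the only candidate backed by an honest sender.
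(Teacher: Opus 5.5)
Your Steps 1 and 2 are exactly the paper's proof. The paper stops at ``every honest processor sets $\outp\gets v$'' and takes for granted, via Claim~\ref{clm:single_confirmed_value} and its remark that ``at most one value $v$ received confirmation messages'', that no other value is confirmed. You are right that this is the weak point, but your Step 3 leaves it open, and neither fallback closes it. The proof of Claim~\ref{clm:single_confirmed_value} uses that both senders actually received $n-f$ votes, so it only constrains honest confirmers. A Byzantine processor may send $\langle\texttt{confirm}~u\rangle$ for any $u$, so invoking the Claim ``as phrased'' invokes something false. The plurality reading also fails. Suppose exactly one honest processor confirms $v$, and the $f$ Byzantine processors send confirmations for $v$ to an honest $p$ but for $u\neq v$ to an honest $p'$. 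Then $p$ outputs $(v,1)$, while $p'$ holds one confirmation for $v$ against $f$ for $u$.

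In fact no output rule can repair Step 3 while grade $1$ is awarded at $f+1$ confirmations, so the gap lies in the algorithm and is shared by the paper's proof. Take $n=7$, $f=2$ and processors $a,d,e,b_1,b_2,c,p'$. In execution A the faulty set is $\{b_1,b_2\}$, with inputs $v$ at $a,d,e$, $u$ at $c$, and $w$ at $p'$. The processors $b_1,b_2$ vote $v$ to $a$ and $u$ to everyone else, so only $a$ confirms (for $v$). They then send $\langle\texttt{confirm}~v\rangle$ to $c$ and $\langle\texttt{confirm}~u\rangle$ to $p'$, so $c$ holds $f+1=3$ confirmations for $v$ and outputs $(v,1)$. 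In execution B the faulty set is $\{a,d\}$, with inputs $v$ at $e$, $u$ at $b_1,b_2,c$, and $w$ at $p'$. The processors $a,d$ vote $u$ to $b_1,b_2$ and $v$ to everyone else, so exactly $b_1,b_2$ confirm (for $u$). Then $a$ sends $\langle\texttt{confirm}~u\rangle$ to $c$ and $\langle\texttt{confirm}~v\rangle$ to $p'$, so $c$ holds $3$ confirmations for $u$ and outputs $(u,1)$. In both executions $p'$ receives votes $v$ from $a,d,e$ and $u$ from $b_1,b_2,c$, then confirmations for $v$ from $a$ and for $u$ from $b_1,b_2$. Its view is identical, so it outputs the same value in both, violating consistency in one of them.

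So your argument is complete only when every confirmation comes from a processor that follows the protocol (e.g.\ crash faults). For Byzantine faults the thresholds must change. Set $\outp\gets v$ only upon at least $f+1$ confirmations for $v$; at most one value qualifies, since each such value has an honest confirmer and honest confirmers agree by the Claim. Set $g\gets 1$ only upon at least $n-f$ confirmations. Then grade $1$ at an honest processor implies that at least $n-2f\ge f+1$ honest processors confirmed $v$, so every honest processor adopts $v$. Validity is preserved because all honest processors confirm their common input.
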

\begin{proof}
If there exists an honest processor that outputs $(v,1)$, then it must have seen $f+1$ confirmations for $v$. At least one of those confirmations came from an honest processor $p$, which means that every honest processor received a confirmation message from $p$ during round 2. Hence, every honest processor sets $\outp\gets v$ and outputs $(v,g)$ for some $g\in \binset$.
\end{proof}

To finish the analysis, we prove that confirmations for two different values cannot coexist.

\begin{claim}
\label{clm:single_confirmed_value}
Given that $f<n/3$.
If processor $p$ sends $\langle p, \texttt{confirm} ~v\rangle$ and processor $p'$ sends
$\langle p', \texttt{confirm} ~v'\rangle$, then $v=v'$
\end{claim}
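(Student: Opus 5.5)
The plan is a standard quorum-intersection argument. The claim only makes sense for confirmations sent by honest processors, since a Byzantine processor can send anything. So I read $p$ and $p'$ as honest, which is also how Claim~\ref{clm:single_confirmed_value} is used in the consistency argument. I also read the confirmation as being for the value $v$ that received $n-f$ votes; the pseudocode writes $\inp$, but the prose makes clear the confirmed value is the one with $n-f$ votes. By the round-2 rule, an honest $p$ sends $\langle p,\texttt{confirm}~v\rangle$ only if it received at least $n-f$ round-1 \texttt{vote} messages for $v$, coming from a set $S$ of at least $n-f$ distinct processors. Likewise, $p'$ received votes for $v'$ from a set $S'$ with $|S'|\ge n-f$. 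Each vote is tagged with its sender's identity on a direct link, so votes cannot be forged on behalf of another processor, and $S,S'\subseteq Q$ are well defined.

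The key step is counting the intersection. We have $|S\cap S'| \ge |S|+|S'|-n \ge n-2f$. Since $f<n/3$, we get $n-2f > f$, so $S\cap S'$ contains at least $f+1$ processors and hence at least one honest processor $q$.

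Finally, use the honesty of $q$. In round 1, $q$ broadcasts a single vote message for its input $\inp^q$ to everyone, and the same message reaches both $p$ and $p'$. Since $q\in S$, that vote is for $v$; since $q\in S'$, it is for $v'$. Therefore $v=\inp^q=v'$.

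The main subtlety is the convention about which processors the claim covers, namely that honest senders are intended and that each processor counts at most one vote per sender. Once that is fixed, the arithmetic $n-2f>f$ is the only place the bound $f<n/3$ is used. The same argument with $p=p'$ also shows that a single honest processor cannot see $n-f$ votes for two different values, which justifies the footnote stating that at most one such value exists.
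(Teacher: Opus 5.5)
Your proof is correct and is essentially the paper's argument. The paper argues by contradiction that the honest voters for $v$ and for $v'$ would form two disjoint sets of size at least $n-2f$ each, so $2(n-2f)+f\le n$, contradicting $f<n/3$; this is the same quorum-intersection count you carry out directly via $|S\cap S'|\ge n-2f>f$, and your remarks that $p,p'$ must be honest, that votes are counted once per sender, and that the pseudocode's \texttt{confirm}~$\inp$ should read \texttt{confirm}~$v$ make explicit what the paper leaves implicit.
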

\begin{proof}
Assume towards contradiction that $v\neq v'$. Hence, $p$ received $n-f$ votes for $v$, at least $n-2f$ of which are from honest processors, and $p'$ received $n-f$ votes for $v'$, at least $n-2f$ of which are from different honest processors. Hence, the system contains at least $2(n-2f)+f$ distinct processors, which implies that $2(n-2f)+f \leq n$. But this implies that $f\geq n/3$, contradiction.
\end{proof}

\section{Recursive crash agreement}

In this section we present algorithms for crash resilient agreement that work for any number $f<n$ of crash faults. In Subsection {\ref{sec:base_rec_alg}}, we present an algorithm that uses a recursive structure inspired by the work of Momose and Ren \cite{momose_ling}. It achieves round complexity $O(n)$ and awake complexity $O(\log n)$. Then, in Subsection {\ref{sec:optimized_rec_alg}}, we show an optimized version of the basic algorithm of Subsection {\ref{sec:base_rec_alg}}, which further improves the round and awake complexity to $O(f)$ and $O(\log f)$, respectively.

\subsection{Basic Construction}
\label{sec:base_rec_alg}

Denote by $Q$ the set of all $n$ processors.
We recursively define the two sides of a set $Q_w$ as $Q_{2w}$ and $Q_{2w+1}$ where $Q_{2w}$ is the set of the first $\ceil{|Q_w|/2}$ processors in $Q_w$, $Q_{2w+1}$ is the set of the last $\floor{|Q_w|/2}$ processors of $Q_w$, and $Q_1 = Q$.

\begin{algorithm}[ht]
\caption{$\RCA[Q_w,\inp]$ code for processor $p$}
\label{alg:rec_crash_agr}
\If{$|Q_w| \leq c$} 
{run any crash agreement algorithm and return the result.
\label{step:2}
}
\If{$p \in Q_{2w}$} {
        Run $v\gets \RCA[Q_{2w},\inp]$ \label{line:first_rec_call} . \\ 
        Send $v$ to every processor in $Q_w$\\
        Set $\outp\gets v$.
    } \Else {sleep 
    for the number of rounds it takes  $\RCA[Q_{2w},\inp]$ to terminate}
    \If{$p \in Q_{2w+1}$}  {
        Set $\inp' \gets \inp$ \\
        \If{received a value $v$ from any processor in $Q_{2w}$} {set $\inp' \gets v$}
        \label{line:after_first_rec}
Run $\outp \gets \RCA[Q_{2w+1},\inp']$.
\label{line:second_rec_call}
}

    \Else{sleep 
     for the number of rounds it takes $\RCA[Q_{2w+1},\inp]$ to terminate.}
    Return $\outp$
\end{algorithm}

\subsubsection*{Correctness}

Towards proving the correctness of the algorithm, we give a few useful definitions. We note that throughout the recursive execution, $\RCA[Q_x,y]$ is called at most once (by each processor) for every possible value of $x$ (the input $y$ might differ between processors).
For the unique execution of $\RCA[Q_x,y]$, let $F_x$ be the set of processors that crash during the execution. Conversely, let $H_x = Q_x\setminus F_x$ be the set of processors that did not crash before the end of the execution.
For every set $Q_x$, after $\RCA[Q_{2x},y]$ finishes its execution, every processor $p$ in $H_{2x}$ sends its output to every processor in $Q_{2x+1}$. We refer to this step as the dissemination step. For a processor $p\in H_{2x}$ we say that $p$ finished dissemination successfully if $p$ sends $v$ to every processor in $Q_{2x+1}$ (before crashing), and unsuccessfully otherwise.
Denote the set of processors in $H_{2x}$ that finished dissemination successfully by $HD_{2x}$.

\begin{lemma}
Algorithm $\RCA[Q_w,\inp]$ satisfies validity.
\end{lemma}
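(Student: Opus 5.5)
I would prove validity by strong induction on $|Q_w|$, using the inductive hypothesis: if every processor of $Q_w$ that does not crash during $\RCA[Q_w,\cdot]$ (i.e., every processor in $H_w$) starts with input $v$, then every processor of $H_w$ returns $v$. Since crash-faulty processors behave correctly until they crash, I will make the slightly stronger claim that if \emph{all} processors of $Q_w$ that are alive at the start of the call have input $v$, then every processor that returns from the call returns $v$. The statement of the lemma follows at $w=1$. The base case $|Q_w|\le c$ is immediate from the validity of the crash agreement algorithm invoked in Step~\ref{step:2}.

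For the inductive step, assume all processors of $Q_w$ alive at the start have input $v$. First apply the hypothesis to $Q_{2w}$, which is strictly smaller than $Q_w$ once $|Q_w|>c\ge 1$. All of its participants hold input $v$, so every processor that completes the call at Line~\ref{line:first_rec_call} obtains $v$. Consequently, the only value any processor of $Q_{2w}$ can send during the dissemination step is $v$. In particular, any processor of $Q_{2w}$ that survives sets $\outp\gets v$ and returns $v$; the second recursive call does not alter its output, since it sleeps during that call.

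Next, consider a processor $p\in Q_{2w+1}$ that is alive at Line~\ref{line:after_first_rec}. It sets $\inp'\gets\inp=v$, and it changes $\inp'$ only upon receiving some value from $Q_{2w}$, which must be $v$. Hence $\inp'=v$ whether or not dissemination succeeded. Thus all processors of $Q_{2w+1}$ alive at the start of the call at Line~\ref{line:second_rec_call} have input $v$. The hypothesis on the smaller set $Q_{2w+1}$ then gives that every surviving processor of $Q_{2w+1}$ returns $v$.

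The main subtlety is the framing of the induction: the claim must be phrased in terms of the processors that actually participate, rather than the nonfaulty ones of the whole system. The reason is that $Q_{2w}$ might be entirely crashed, or processors might crash mid-dissemination. The argument above avoids this issue by noting that $p$'s default value is already $v$, so partial or absent dissemination is harmless. It also uses the fact that a crashed processor never sends wrong values, only possibly no values.
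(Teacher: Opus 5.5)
Your induction is the paper's proof. The hypothesis on $Q_{2w}$ shows that only $v$ can be disseminated, so every processor of $Q_{2w+1}$ enters the second call with $\inp'=v$ whether or not dissemination succeeded, and the hypothesis on $Q_{2w+1}$ finishes. Your phrasing of the hypothesis is actually more careful than the paper's: everyone alive at the start of the call holds $v$, and everyone who returns from the call returns $v$. The paper invokes validity of $\RCA[Q_{2w},\inp]$ to get $v$ at all of $H_{2w}$. But members of $H_{2w}$ that crash later (say, while sleeping through the second call) are faulty, so the assumption ``all non-faulty processors start with $v$'' says nothing about their inputs.

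The one wrong step is your remark that the second claim is ``slightly stronger'' and yields the lemma at $w=1$. Its hypothesis also constrains processors that crash later, so it is weaker than the first claim, not stronger. At $w=1$ it gives only the crash-model validity ``if every processor starts with $v$, every non-faulty processor outputs $v$''. That is not the paper's definition, where only the non-faulty processors' inputs are fixed.

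No argument can close this gap, because the literal statement is false when $n>c$ and $f\ge\lceil n/2\rceil$. Give all of $Q_2$ input $v'$ and all of $Q_3$ input $v\ne v'$. Then $\RCA[Q_2,\cdot]$ returns $v'$, all of $Q_2$ disseminates $v'$, and $\RCA[Q_3,\cdot]$ runs on $v'$ and returns $v'$. The adversary then crashes all of $Q_2$ before the end. The non-faulty processors all had input $v$ but output $v'$. So keep your induction, but say explicitly that it proves the all-inputs version of validity, rather than claiming it implies the definition as written.
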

\begin{proof}
We prove the claim by induction on the size of $Q_{w}$.
For the induction basis,
if $|Q_w| \leq c$, then the algorithm runs a crash agreement and validity is guaranteed.
Now consider $|Q_w| > c$. If all non-faulty processors start with the same input $v$, then $\RCA[Q_{2w},\inp]$ returns $v$ at any processor 
in $H_{2w}$, since it satisfies validity by the induction hypothesis (since $|Q_{2w}|< |Q_{w}|$).
Thus, every non-faulty processor in $Q_{2w+1}$ runs $\RCA[Q_{2w+1},\inp']$ with input $v$ (since it either receives $v$ from a processor in 
$H_{2w}$, or it uses its own input $\inp = v$).
By the induction hypothesis, $\RCA[Q_{2w+1},\inp']$ also satisfies validity and every processor in $H_{2w+1}$ returns $v$. Hence, every non-faulty processor in $Q_w$ sets $\outp  \gets v$ and outputs $v$.
\end{proof}

\begin{lemma}
Algorithm $\RCA[Q_w,\inp]$ satisfies agreement.
\end{lemma}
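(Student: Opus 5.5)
We prove the statement by strong induction on $|Q_w|$, as in the validity proof. For the base case $|Q_w|\le c$, agreement holds because we run an arbitrary crash agreement algorithm that satisfies it. For $|Q_w|>c$, note that every non-faulty processor of $Q_w$ outputs the value returned by the second recursive call $\RCA[Q_{2w+1},\inp']$, or, if it lies in $Q_{2w}$, that value overwrites the one it set after the first call. So the goal is to show that all processors of $H_w$ output the same value. One subtlety must be settled first: the code sets $\outp$ only from the second call when $p\in Q_{2w+1}$. Processors in $Q_{2w}$ keep the output $v$ of $\RCA[Q_{2w},\cdot]$. Hence we must show that the outputs of $H_{2w}$ and of $H_{2w+1}$ coincide.

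We split into two cases according to whether $HD_{2w}$ is empty.

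\textbf{Case 1: $HD_{2w}\neq\emptyset$.} By the induction hypothesis applied to $\RCA[Q_{2w},\cdot]$ (agreement, since $|Q_{2w}|<|Q_w|$), all processors of $H_{2w}$ return a common value $v$. Any processor of $Q_{2w}$ that sends a value during dissemination belongs to $H_{2w}$, so it sends exactly this $v$. Here we use that crashed processors never send inconsistent values, only possibly partial sends. A processor in $HD_{2w}$ reaches every processor in $Q_{2w+1}$, so every processor of $Q_{2w+1}$ that is alive at that point sets $\inp'\gets v$. Thus all processors entering the second call start with $v$. Validity of $\RCA[Q_{2w+1},\cdot]$, by the previous lemma, then gives output $v$ at all of $H_{2w+1}$. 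Together with $H_{2w}$ outputting $v$, this establishes agreement on $Q_w$.

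\textbf{Case 2: $HD_{2w}=\emptyset$.} Every processor of $Q_{2w}$ crashes before completing dissemination, and so crashes during the execution on $Q_w$. Hence no processor of $Q_{2w}$ is non-faulty in $Q_w$, and only processors of $H_{2w+1}$ produce outputs for $Q_w$. (Here ``non-faulty in $Q_w$'' should be read as belonging to $H_w$, i.e., not crashing before the end of the execution; a processor crashing during dissemination is in $F_w$.) Their inputs $\inp'$ may differ, but the induction hypothesis for agreement on $\RCA[Q_{2w+1},\cdot]$ gives that all of $H_{2w+1}$ return the same value. This holds regardless of the inputs, so agreement follows.

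The main obstacle is formalizing this bookkeeping. We need the invariant that a processor of $Q_{2w}$ that survives the whole execution on $Q_w$ necessarily lies in $HD_{2w}$. That invariant drives the case split. We also need to check the sleeping schedule: processors of $Q_{2w+1}$ are awake exactly in the dissemination round, so sent messages are not lost. The $f<n$ bound plays no role beyond ensuring that the recursive calls are themselves well-defined instances tolerating arbitrarily many crashes.
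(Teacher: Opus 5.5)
Your proof is correct and matches the paper's argument. It uses induction on $|Q_w|$ and a case split on whether $HD_{2w}$ is empty. When $HD_{2w}\neq\emptyset$ it combines agreement of the first call with validity of the second; when $HD_{2w}=\emptyset$ it combines $H_w\subseteq H_{2w+1}$ with agreement of the second call. (Your opening remark that the second call's value overwrites the output of processors in $Q_{2w}$ is wrong, since they never run that call, but you correct it immediately and the rest of the argument uses the right reading.)
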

\begin{proof}
We prove the claim by induction on the size of $Q_{w}$.
For the induction basis,
if $|Q_w| \leq c$, then the algorithm runs a crash agreement and agreement is guaranteed.
Now consider $|Q_w| > c$.
By the induction hypothesis, $\RCA[Q_{2w},\inp]$ satisfies agreement. Let $v$ be the decision value of $\RCA[Q_{2w},\inp]$.
We consider the following two cases.

\noindent
(a) 
$HD_{2w}\neq \emptyset$:
In this case, there exists a processor $p\in HD_{2w}$ that broadcasts $v$ successfully during the dissemination step. Hence, every processor in $Q_{2w+1}$ executes $\RCA[Q_{2w+1},\inp']$ with input $v$, and by the validity of $\RCA[Q_{2w+1},\inp']$, every 
processor in 
$H_{2w+1}$ returns $v$ from $\RCA[Q_{2w+1},\inp']$. Hence, every non-faulty processor in $Q_w $ sets $\outp\gets v$ and returns $v$ from the execution of $\RCA[Q_w,\inp]$.

\noindent
(b) 
$HD_{2w} = \emptyset$:
In this case, every processor in $H_{2w}$ crashed before the start of $\RCA[Q_{2w+1},\inp']$, therefore $H_w \subseteq H_{2w+1}$. By the induction hypothesis, $\RCA[Q_{2w+1},\inp']$ guarantees agreement. Let $v$ be the decision value of $\RCA[Q_{2w+1},\inp']$. Hence, every processor in $H_w \subseteq H_{2w+1}$ sets $\outp\gets v$ and returns $v$ from the execution of $\RCA[Q_w,\inp]$. Thus, agreement is satisfied by the execution of $\RCA[Q_w,\inp]$. 
\end{proof}

\paragraph{Complexity.}
The time complexity of Algorithm \ref{alg:rec_crash_agr} is $O(|Q|) = O(n)$, since the Time complexity follows the formula $T(n) = O(1) + 2T(n/2)$.
where $T(x) = O(1)$ for every $0<x\leq c$.
The awake complexity is $O(\log n)$ since it follows the formula $A(n) = O(1) + A(n/2)$
where $A(x) = O(1)$ for every $0<x\leq c$.

\paragraph{Note on lower bounds.} As a first step towards a lower bound on the awake complexity of agreement, one may consider subclasses of agreement algorithms (e.g., the subclass $\cA_\pi$ of algorithms that satisfy some additional property $\pi$). One property that sometimes comes up in the context of binary agreement, i.e, $V=\binset$ is \emph{$1-$preference}, i.e, the requirement that every processor that receives a 1-valued message must output 1.
We remark that while Algorithm \ref{alg:rec_crash_agr} does not satisfy the 1-preference property, it can be modified slightly so as to satisfy 1-preference with the same (asymptotic) complexities. This implies that at least the subclass $\cA_{1-preference}$ of algorithms 
does not admit a lower bound higher than $\log n$ on its worst-case awake complexity. The required modification is as follows. First, replace the crash agreement algorithm of Step \ref{step:2} with one satisfying the 1-preference property.
Second, add a preliminary round of communication at the beginning of each recursive call, where every processor $p \in Q_{2w+1}$ sends its input to every processor $p' \in Q_{2w}$. Every $p' \in Q_{2w}$ that receives a 1 (from at least one processor in $Q_{2w+1}$) overwrites its own input to 1. 
The proof that this modification transforms Algorithm \ref{alg:rec_crash_agr} into one satisfying 1-preference is deferred to appendix \ref{sec:deferred_proofs}.

\subsection{Optimized Construction}
\label{sec:optimized_rec_alg}
Here we present an optimization that uses Algorithm {\ref{alg:rec_crash_agr}} as building block, and improves its round and awake complexity from $O(n)$ and $O(\log n)$ to $O(f)$ and $O(\log f)$ respectively.

The idea behind the algorithm is to split the set of processors $Q$ into disjoint subsets of size $f+1$ $Q^1, Q^2, \ldots, Q^s$ for $s = \floor{n/(f+1)}$ (with an additional smaller subset $Q^{s+1}$ for the remaining processors if needed). Then, run an instance of Algorithm {\ref{alg:rec_crash_agr}} on each subset (excluding $Q^{s+1}$) in parallel ($\RCA[Q^i]$ for $i\in[1,s]$). After each instance terminates, and every subset $Q^i$ agrees on a decision value $v_i$, every processor broadcasts the decision value of the subset in belongs to. Since $|Q^i| =f+1$, at least one processor in each $Q^i$ will successfully broadcast $v_i$. Lastly, every processor computes $v = \max\{v_i\}_{i\in[1,s]}$ and outputs $v$.

\begin{algorithm}
    \caption{Optimized $\RCA$}
    \label{alg:optimized_rec_crash_agr}
    Split $Q$ into $s=\floor{n/(f+1)}$ subsets of size $f+1$ each $Q^1, Q^2, \ldots, Q^s$ and let $Q^{s+1}$ be the set of the remaining $n-(f+1)s$ processors.\\
    In parallel, execute $\RCA[Q^i]$ and let $v_i$ be the returned value for every $i\in [1,s]$.\label{line:multi_rec_executions}\\
    For $i\in [1,s]$, if $p \in Q^i$, send $v_i$ to every processor \label{line:broadcast_sub_decisions}. 
    \\
    Collect the subset decision values $V=(v_1, v_2, \ldots,v_s)$ (use $\bot$ for any missing value), and set $\outp \gets \max \{v_i\}_{i\in[1,s]}$ \label{line:collect_sub_decisions}.
    \\
    Return $\outp$.
    
\end{algorithm}
\subsubsection*{Correctness}
\begin{lemma}
    Algorithm \ref{alg:optimized_rec_crash_agr} satisfies validity
\end{lemma}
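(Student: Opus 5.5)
The plan is to reduce validity of Algorithm \ref{alg:optimized_rec_crash_agr} to the validity of Algorithm \ref{alg:rec_crash_agr}, proved earlier by induction, applied separately to each block $Q^i$. Then we check that the final broadcast-and-max step cannot introduce any other value.

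\textbf{Step 1.} Suppose every non-faulty processor in $Q$ has input $v$. For each $i\in[1,s]$, the non-faulty processors of $Q^i$ are a subset of the non-faulty processors of $Q$, so they all have input $v$. The execution of $\RCA[Q^i]$ in line \ref{line:multi_rec_executions} runs only among the processors of $Q^i$, and different instances run in parallel without interacting. The earlier validity lemma for $\RCA$ therefore applies to each instance. Hence every processor of $Q^i$ that has not crashed by the end of $\RCA[Q^i]$ sets $v_i=v$.

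One subtlety: a crashed processor may have had a different input. This is harmless, because validity in the crash model only constrains the inputs of non-faulty processors, and the earlier lemma is stated in exactly that form.

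\textbf{Step 2.} In line \ref{line:broadcast_sub_decisions} a processor sends only the value $v_i$ it actually computed. Crash faults cannot forge messages, so every value a non-faulty processor receives in line \ref{line:collect_sub_decisions} equals $v$.

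\textbf{Step 3.} We must show a non-faulty processor receives at least one value at all. Since $|Q^i|=f+1>f$, at least one processor of $Q^i$ never crashes (assuming $s\ge 1$, i.e.\ $f<n$). That processor finishes $\RCA[Q^i]$ and completes its broadcast, so every non-faulty processor receives $v_i=v$ for every $i\in[1,s]$. Thus every collected entry of $V$ is $v$, and each missing entry is $\bot$, which is ignored by the max. It follows that $\outp=\max\{v_i\}=v$.

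The only step needing care is Step 1: we must be sure the inductive validity lemma applies verbatim to an instance run on the subset $Q^i$ rather than on the whole of $Q$. I would argue this directly from the fact that Algorithm \ref{alg:rec_crash_agr} is defined for an arbitrary set $Q_w$, and that its proof never used $Q_w=Q$.
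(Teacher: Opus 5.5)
Your route matches the paper's: apply validity of Algorithm~\ref{alg:rec_crash_agr} to each block, then observe that the broadcast-and-max step only sees $v$. However, the point you dismiss as a harmless subtlety is exactly where the argument breaks.

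The value $v_i$ is fixed by the end of Line~\ref{line:multi_rec_executions}, so it cannot depend on which processors crash afterwards. A processor of $Q^i$ that survives $\RCA[Q^i]$ and crashes only in Line~\ref{line:broadcast_sub_decisions} is globally faulty, so your hypothesis says nothing about its input. Yet it ran $\RCA[Q^i]$ faithfully and counts as non-faulty for that instance, since it lies in the relevant $H$-set. The validity lemma for $\RCA$ therefore cannot be invoked from the global hypothesis.

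Here is a concrete counterexample. Take $n=2$ and $f=1$, so $s=1$ and $Q^1=Q$, with inputs $0$ and $1$ and no crash during Line~\ref{line:multi_rec_executions}. Both processors obtain the same $v_1=b$, and some processor $q$ has $\inp^q\neq b$. Crash the other processor at the start of Line~\ref{line:broadcast_sub_decisions}. Then $q$ is the only non-faulty processor, so the paper's definition demands output $\inp^q$, yet $q$ outputs $\max\{v_1\}=b$.

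So the conclusion of your Step~1 is false, and under the literal definition the lemma itself fails. In fact, letting the other processor crash only in its final round, after all its messages are delivered, shows that no algorithm meets that definition for $n=2$, $f=1$.

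The paper's proof makes the same one-line appeal to Algorithm~\ref{alg:rec_crash_agr}. That lemma's own induction has the same issue, since $H_{2w}$ may contain processors that crash during the second recursive call. So this is a gap you share with the paper, not a departure from it.

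What the argument actually proves is validity under the hypothesis that \emph{all} processors, including those that later crash, start with $v$. This is the usual validity condition for crash faults. Under that hypothesis Step~1 is immediate: every participant of $\RCA[Q^i]$ has input $v$, and inside $\RCA$ every $\inp'$ is either the processor's own input or a value disseminated by a sub-instance. By induction every sub-instance returns $v$. Your Steps~2 and~3 then complete the proof: crashes cannot forge values, and $|Q^i|=f+1$ with $s\ge 1$ makes the collected set non-empty.

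State the hypothesis in this form, rather than claiming that the inputs of crashed processors are irrelevant.
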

\begin{proof}
If every non-faulty processor has the same input $v$, then by the validity of Algorithm \ref{alg:rec_crash_agr}, $v_i = v$ for every $i\in [1,s]$. Hence, in Line \ref{line:collect_sub_decisions}, every processor 
sets $\outp = \max \{v\}_{i\in[1,s]}=v$. Subsequently, every processor in $Q^{s+1}$ receives $v$ and sets $\outp\gets v$. Thus $\outp = v$ for every non-faulty processor. 
\end{proof}



\begin{lemma}
    Algorithm \ref{alg:optimized_rec_crash_agr} satisfies agreement
\end{lemma}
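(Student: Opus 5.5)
The plan is to show that every non-faulty processor ends up holding exactly the same vector $(v_1,\ldots,v_s)$ of subset decisions; the maximum computed in Line \ref{line:collect_sub_decisions} then coincides at all of them. Agreement within each subset comes from the agreement lemma for Algorithm \ref{alg:rec_crash_agr}. Completeness of the vector comes from a counting argument on the sizes of the $Q^i$.

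First, fix $i\in[1,s]$. The execution of $\RCA[Q^i]$ in Line \ref{line:multi_rec_executions} is run only by the processors of $Q^i$, and the parallel instances do not interact. So by the agreement lemma for Algorithm \ref{alg:rec_crash_agr}, every processor of $Q^i$ that does not crash during that execution returns the same value $v_i$. Consequently, any message sent in Line \ref{line:broadcast_sub_decisions} by a processor of $Q^i$ carries this same value $v_i$. A crashed processor either sends $v_i$ or sends nothing, since crash faults cannot send wrong values. Therefore no non-faulty processor ever records two different values for index $i$.

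Second, I will show that for each $i$, every non-faulty processor actually receives $v_i$ rather than $\bot$. Since $|Q^i|=f+1$ and at most $f$ processors crash during the whole execution, at least one processor $q\in Q^i$ never crashes. This $q$ completes $\RCA[Q^i]$ and, in Line \ref{line:broadcast_sub_decisions}, sends $v_i$ to every processor of $Q$, including those in $Q^{s+1}$. In the synchronous model, every non-faulty processor receives this message. Hence every non-faulty processor holds $v_i$ in coordinate $i$, for all $i\in[1,s]$.

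Combining the two steps, all non-faulty processors hold the identical vector $(v_1,\ldots,v_s)$ with no $\bot$ entries. They therefore compute the same $\max\{v_i\}_{i\in[1,s]}$, so agreement holds. This also covers the processors in $Q^{s+1}$, which run no recursive instance but still receive every $v_i$.

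The only delicate point is the second step: the guarantee must be that some processor of $Q^i$ survives the entire execution, not merely the recursion, so that the broadcast in Line \ref{line:broadcast_sub_decisions} is complete. The bound of $f$ total faults together with $|Q^i|=f+1$ gives exactly this. I also need $s\ge 1$, which holds because $f<n$ implies $f+1\le n$. This guarantees the maximum is taken over a nonempty set.
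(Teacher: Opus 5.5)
Your proof is correct and follows essentially the same route as the paper. Agreement within each $Q^i$ comes from the agreement of Algorithm~\ref{alg:rec_crash_agr}, and $|Q^i|=f+1$ guarantees that some processor of $Q^i$ delivers $v_i$ to everyone, so all non-faulty processors take the maximum of the same complete vector. Your extra remarks (crashed senders cannot inject a conflicting value for index $i$, and $s\ge 1$ since $f<n$) are small details the paper leaves implicit.
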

\begin{proof}
Since Algorithm \ref{alg:rec_crash_agr} satisfies agreement, every 
processor in $Q^i$ that did not crash during the execution of $\RCA[Q^i]$ returns the same value $v_i$ from the execution of $\RCA[Q^i]$ in Line \ref{line:multi_rec_executions}, for every $i\in [1,s]$. Also, since $|Q^i|=f+1$, at least one processor in $Q^i$ will successfully broadcast $v_i$ in Line \ref{line:broadcast_sub_decisions}, for every $i\in [1,s]$. Subsequently, every non-faulty processor successfully constructs $V = (v_1,v_2, \ldots, v_s)$ ($v_i \neq \bot$ for every $i\in [1,s]$) and sets $\outp \gets v$ for $v = \max\{v_i\}_{i\in[1,s]}$. Therefore, every non-faulty outputs the same value, satisfying agreement.
\end{proof}
\paragraph{Complexity}
The time complexity of each recursive call in step $2$ is $O(f)$. Since they are performed in parallel, the total time complexity of step 2 is $O(f)$. There are two more rounds of communication in steps 3 and 4, thus the total number of rounds is $O(f)$.

Since the subsets are pairwise disjoint and the awake complexity of each recursive call in step 2 is $O(\log f)$, the awake complexity of step 2 is $O(\log f)$. After that there are at most two additional awake rounds bringing the overall awake complexity of the algorithm to $O(\log f)$.

\bibliographystyle{plainurl}
\bibliography{refs}

\clearpage
\centerline{\Large\bf Appendix}
\appendix
\section{Deferred proofs}
\label{sec:deferred_proofs}


\begin{lemma}
The modified Algorithm \ref{alg:rec_crash_agr} satisfies 1-preference.
\end{lemma}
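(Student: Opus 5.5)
We argue by induction on $|Q_w|$ that the modified $\RCA[Q_w,\inp]$ has the following property: if some processor $q\in Q_w$ that does not crash during the execution receives a 1-valued message during the execution (including its own input being 1, which we treat as a message to itself), then every processor of $Q_w$ that does not crash outputs 1. This is the recursive form of 1-preference. Because the modified algorithm only ever sends inputs or decision values, every 1-valued message traces back to an input equal to 1. So it suffices to show two things. First, a 1-valued message received by a non-crashed processor at any level forces output 1. Second, whenever some non-crashed processor starts a recursive call with input 1, all surviving outputs are 1.

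\textbf{Base case.} If $|Q_w|\le c$, the modified algorithm runs a crash agreement algorithm that itself satisfies 1-preference, so the property holds directly.

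\textbf{Inductive step.} Suppose a non-crashed $q\in Q_w$ sees a 1. There are three cases, according to where the 1 appears.
\begin{itemize}
\item \emph{The 1 appears in the preliminary round, or $q\in Q_{2w}$ has input 1.} Then $q$ enters $\RCA[Q_{2w},\cdot]$ with input 1. By the induction hypothesis, every surviving processor of $Q_{2w}$ decides 1. Moreover $q$ survives and disseminates 1 to all of $Q_{2w+1}$. Each non-crashed processor of $Q_{2w+1}$ therefore receives a 1 from some processor of $Q_{2w}$, and all values it can receive from $Q_{2w}$ equal the common decision 1, so it sets $\inp'=1$. By validity of $\RCA[Q_{2w+1},\cdot]$, proved earlier, every surviving processor of $Q_{2w+1}$ outputs 1, and hence all of $Q_w$ outputs 1.
\item \emph{$q\in Q_{2w+1}$ and the 1 is its own input.} Then $q$ sent 1 to all of $Q_{2w}$ in the preliminary round and survived. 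So every surviving processor of $Q_{2w}$ switches its input to 1, and we are back in the first case. (If all of $Q_{2w}$ crash, $q$ still enters the second call with input 1, and the induction hypothesis applied to $Q_{2w+1}$ finishes.)
\item \emph{$q$ sees the 1 during dissemination or inside a recursive call.} The induction hypothesis applied to that call, combined with the agreement argument, transfers the conclusion upward.
\end{itemize}

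The main obstacle is the crash-timing subtlety. A 1 may be received only by processors that later crash, while another survivor never sees it. We must check that a surviving receiver always propagates it. The key point is that the receiver $q$ itself is non-crashed and participates in every later phase where it sends: in the preliminary round if it belongs to $Q_{2w+1}$, and in dissemination if it belongs to $Q_{2w}$. That lets us push the 1 forward.

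To close the argument at the top level, we also need to handle dissemination received by $q\in Q_{2w+1}$ from a processor that later crashed. Here we use agreement of $\RCA[Q_{2w}]$ together with the fact that $q$ then runs the second call with input 1, so the induction hypothesis gives output 1 for all survivors of $Q_{2w+1}$. We then need every survivor of $Q_{2w}$ to agree. Since the first-call decision $v$ is common to all of $Q_{2w}$ and a 1 was disseminated, $v=1$.
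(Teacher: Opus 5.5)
Your argument is correct and is essentially the paper's induction on $|Q_w|$. Input 1 is counted as receiving a 1, and the case split is the same: a survivor in $Q_{2w}$ is handled by the first recursive call, a survivor in $Q_{2w+1}$ with input 1 by the preliminary round forcing the first call to decide 1, and a 1 arriving by dissemination or inside the second call by the induction hypothesis on $Q_{2w+1}$. The one difference is that you carry the stronger hypothesis that \emph{all} survivors output 1, whereas the paper only shows that the receiver itself does. Your sketchy third case (a processor of $Q_{2w+1}$ first seeing a 1 inside the second call, where nothing pushes the 1 back to $Q_{2w}$) really does need the already-proved agreement of $\RCA[Q_w]$, and once you invoke that, the extra propagation work in your first case is unnecessary.
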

\begin{proof}
Assume that there exists a processor $p$ that received 1 during the execution of $\RCA[Q_w, y]$, and did not crash before the end of the execution ($p \in H_{w}$).
We prove the Lemma by induction on the size of $Q_w$. For the induction basis, if $|Q_w| \leq c$, then the crash agreement algorithm invoked in Step \ref{step:2} of the algorithm
satisfies 1-preference, hence $p$ outputs 1.
Now suppose $|Q_w| > c$. We consider two cases. 
\\
(1) $p \in Q_{2w}$: Then $p$ receives all of its messages before or during the execution of $\RCA[Q_{2w}, \inp]$ in Line \ref{line:first_rec_call}. Since $\RCA[Q_{2w}, \inp]$ satisfies 1-preference by the induction hypothesis, every processor $p'\in H_{2w}$ returns 1. Hence, $p$ decides $1$.
\\
(2) $p \in Q_{2w+1}$: Then $p$ either has input 1, or it receives 1 after the first recursive call in Line \ref{line:after_first_rec}.
\\
$\bullet$ If the input of $p$ is 1, then it successfully sends $1$ to every processor in $Q_{2w}$, before the execution of the first recursive call in Line \ref{line:first_rec_call}. Hence, $\RCA[Q_{2w}, \inp]$ returns $1$ at every processor $p\in H_{2w}$. Therefore, $p$ runs the second recursive call $\RCA[Q_{2w+1}, \inp]$ in Line \ref{line:second_rec_call} with $\inp = 1$ (because it either uses its own input value which is 1, or it uses the decision value of the first recursive call which is also 1).
By the induction hypothesis, $\RCA[Q_{2w+1}, \inp]$ satisfies 1-preference, hence it returns 1 at every $p' \in H_{2w+1}$. Therefore, $p$ decides 1.
\\
$\bullet$ If $p$ receives 1 after the first recursive call, then it runs $\RCA[Q_{2w+1}, \inp]$ with $\inp = 1$ and again by the induction hypothesis, $\RCA[Q_{2w+1}, 1]$ returns 1 at every processor $p' \in H_{2w+1}$. Therefore, $p$ decides 1.
     \end{proof}
\end{document}